\theoremstyle{plain}
\newtheorem{theorem}{Theorem}[section]
\newtheorem{lemma}[theorem]{Lemma}
\newtheorem{proposition}[theorem]{Proposition}
\theoremstyle{definition}
\newtheorem{definition}[theorem]{Definition}
\newtheorem{remark}[theorem]{Remark}
\theoremstyle{remark}
\numberwithin{equation}{section}
\newcommand{\E}{\mathbb{E}}
\newcommand{\probp}{\mathbb{P}}
\newcommand{\R}{\mathbb{R}}
\newcommand{\N}{\mathbb{N}}
\newcommand{\cF}{{\mathcal{F}}}
\newcommand{\cC}{\mathcal{C}}
\newcommand{\VaR}{\mathop {\rm VaR}\nolimits}
\newcommand{\ES}{\mathop {\rm ES}\nolimits}
\newcommand{\abs}[1]{\lvert#1\rvert}
\newcommand{\norm}[1]{\lVert#1\rVert}
\def\keywords{\vspace{.5em}
{\noindent\textbf{Keywords}:\,\relax%
}}
\def\JELclassification{\vspace{.5em}
{\noindent\textbf{JEL classification}:\,\relax%
}}
\def\MSCclassification{\vspace{.5em}
{\noindent\textbf{MSC}:\,\relax%
}}
\def\@fnsymbol#1{\ensuremath{\ifcase#1\or 1\or 2\or 3\or 4\or 5\or 6\or 7\or
8\else\@ctrerr\fi}}
\begin{document}

\title{Stability properties of Haezendonck-Goovaerts premium principles}

\author{
\sc{Niushan Gao}\,,
\sc{Foivos Xanthos}
}
\affil{Department of Mathematics, Ryerson University, Toronto, Canada\\
{\tt niushan@ryerson.ca}\,, {\tt foivos@ryerson.ca}}
\author{
\sc{Cosimo Munari}
}
\affil{Center for Finance and Insurance and Swiss Finance Institute, University of Zurich,
Switzerland\\
{\tt cosimo.munari@bf.uzh.ch}}

\date{\today}

\maketitle

\parindent 0em \noindent

\begin{abstract}
\noindent We investigate a variety of stability properties of Haezendonck-Goovaerts premium principles on their natural domain, namely Orlicz spaces. We show that such principles always satisfy the Fatou property. This allows to establish a tractable dual representation without imposing any condition on the reference Orlicz function. In addition, we show that Haezendonck-Goovaerts principles satisfy the stronger Lebesgue property if and only if the reference Orlicz function fulfills the so-called $\Delta_2$ condition. We also discuss (semi)continuity properties with respect to $\Phi$-weak convergence of probability measures. In particular, we show that Haezendonck-Goovaerts principles, restricted to the corresponding Young class, are always lower semicontinuous with respect to the $\Phi$-weak convergence.
\end{abstract}

\keywords{Haezendonck-Goovaerts principles, Orlicz spaces, Fatou property, Lebesgue property, $\Phi$-weak convergence}

\JELclassification{C65, G10}

\MSCclassification{91B30, 91G80, 46E30}


\section{Introduction}

Premium principles based on Orlicz norms were introduced in the classical paper by Haezendonck and Goovaerts~\cite{HaezendonckGoovaerts1982} as multiplicative equivalents of the zero utility principle. The original formulation was extended by Goovaerts et al.~\cite{GoovaertsKaasDhaeneTang2004} and by Bellini and Rosazza Gianin~\cite{BelliniRosazza2008} to account for general (not necessarily positive) losses. The extended formulation is known under the name of {\em Haezendonck-Goovaerts premium principle} (or equivalently {\em Haezendonck-Goovaerts risk measure}) and has been the subject of extensive study in the literature, see Bellini and Rosazza Gianin~\cite{BelliniRosazza2009} and \cite{BelliniRosazza2012}, Goovaerts et al.~\cite{GoovaertsKaasDhaeneTang2012}, Mao and Hu~\cite{MaoHu2012}, Tang and Yang~\cite{TangYang2012} and \cite{TangYang2014}, Zhu et al.~\cite{ZhuZhangZhang2013}, Ahn and Shyamalkumar~\cite{AhnShyamalkumar2014}, Peng et al.~\cite{PengWangZheng2015}, Wang and Peng~\cite{WangPeng2016}, Liu et al.~\cite{LiuPengWang2017}, Wang et al.~\cite{WangLiuHouPeng2018}. We also refer to the general survey by Goovaerts and Laeven~\cite{GoovaertsLaeven2008} and to the recent paper by Bellini et al.~\cite{BelliniLaevenRosazza2018} where an axiomatization of Orlicz premia in terms of acceptance sets is established.

\smallskip

As already pointed out in the original work by Haezendonck and Goovaerts~\cite{HaezendonckGoovaerts1982}, the natural model space for such premium principles are Orlicz spaces. Perhaps influenced by the earlier framework of risk measure theory, the bulk of the later literature following~\cite{HaezendonckGoovaerts1982}  has developed the theory of Haezendonck-Goovaerts premium principles for bounded losses. In Bellini and Rosazza Gianin~\cite{BelliniRosazza2012} the authors go beyond the bounded setting and focus on a general Orlicz space. However, their main results, most notably their dual representation, are only valid in a special class of Orlicz spaces, which are characterized by Orlicz functions satisfying the growth condition known as {\em $\Delta_2$ condition} in the literature. In this special case, the Orlicz space allows for a tractable duality theory, see e.g.\ Cheridito and Li~\cite{CheriditoLi2008,CheriditoLi2009}.

\smallskip

Since the Orlicz function defining the premium principle in its original formulation originates from a general (von Neumann-Morgenstern) utility function, there is {\em a priori} no legitimate restriction on the Orlicz function and, thus, on the corresponding Orlicz space. In this short note, we aim to fill this gap and provide a full picture on the stability properties and dual representation of Haezendonck-Goovaerts premium principles defined on a general Orlicz space. Most notably, we prove that Haezendonck-Goovaerts premium principles always satisfy the Fatou property and characterize when they satisfy the stronger Lebesgue property. As a consequence, we can exploit some recent results on functionals on Orlicz spaces, see Gao et al.~\cite{GaoLeungMunariXanthos2017}, to establish tractable dual representations of Haezendonck-Goovaerts premium principles without requiring the $\Delta_2$ condition. In addition, we investigate (semi)continuity properties with respect to the $\Phi$-weak convergence of probability measures. We show that $\Phi$-weak lower semicontinuity is always fulfilled on the appropriate Young domain and establish that $\Phi$-weak continuity holds if and only if $\Phi$ satisfies the $\Delta_2$ condition.


\section{Haezendonck-Goovaerts premium principles on Orlicz spaces}

We first provide a brief review of Orlicz spaces. For more details on Orlicz spaces we refer to Edgar and Sucheston~\cite{EdgarSucheston1992} and Zaanen~\cite{Zaanen1983}. Throughout the note, let $(\Omega,\cF,\probp)$ be a nonatomic probability space. We denote by $L^0$ the set of Borel measurable functions $X:\Omega\to\R$. As usual, we identify two functions that are equal almost surely. The set $L^0$ is equipped with its canonical vector lattice structure. A nonconstant function $\Phi:[0,\infty)\to[0,\infty]$ is said to be an {\em Orlicz function} if it is convex, nondecreasing, left-continuous, and satisfies $\Phi(0)=0$. The {\em conjugate} of $\Phi$ is the map $\Psi:[0,\infty)\to[0,\infty]$ defined by
\[
\Psi(y) := \sup_{x\in[0,\infty)}\{xy-\Phi(x)\}.
\]
The function $\Psi$ is also an Orlicz function. The {\em Orlicz space} associated with $\Phi$ is defined by
\[
L^\Phi := \left\{X\in L^0 \,; \ \E\left[\Phi\left(\frac{|X|}{\lambda}\right)\right]<\infty \ \ \mbox{for some} \ \lambda\in(0,\infty)\right\}.
\]
The {\em Orlicz heart} of $L^\Phi$ is given by
\[
H^\Phi := \left\{X\in L^\Phi \,; \ \E\left[\Phi\left(\frac{|X|}{\lambda}\right)\right]<\infty \ \ \mbox{for every} \ \lambda\in(0,\infty)\right\}.
\]
The Orlicz space $L^\Phi$ is a Banach lattice with respect to the Luxemburg norm
\[
\|X\|_\Phi := \inf\left\{\lambda\in(0,\infty) \,; \ \E\left[\Phi\left(\frac{|X|}{\lambda}\right)\right]\leq1\right\}.
\]
We always equip the Orlicz space $L^\Psi$ with the Orlicz norm
\[
\|Y\|_\Psi := \sup\{|\E[XY]| \,; \ X\in L^\Phi, \ \|X\|_\Phi\leq1\}.
\]
The subspace of $L^0$ consisting of $\probp$-bounded functions, respectively $\probp$-integrable functions, is denoted by $L^\infty$, respectively $L^1$. We always have $L^\infty\subset L^\Phi\subset L^1$ with norm-continuous embeddings. The norm dual of $L^\Phi$ is therefore a subspace of the norm dual of $L^\infty$ and in general cannot be identified with a function space. On the contrary, the norm dual of $H^\Phi$ can always be identified with $L^\Psi$ provided that $\Phi$ is finitely valued (see Edgar and Sucheston~\cite[Theorem 2.2.11]{EdgarSucheston1992}).

\smallskip

We say that $\Phi$ satisfies the {\em $\Delta_2$ condition} if there exist $y\in(0,\infty)$ and $k\in(0,\infty)$ such that $\Phi(2x)<k\Phi(x)$ for every $x\in[y,\infty)$. Since the underlying probability space is nonatomic, we have that $\Phi$ satisfies the $\Delta_2$ condition if and only if $L^\Phi=H^\Phi$ (see Edgar and Sucheston~\cite[Theorem 2.1.17]{EdgarSucheston1992}).

\smallskip

From now on, as is standard in the literature on Haezendonck-Goovaerts principles, we fix a finite-valued Orlicz function $\Phi$ that is normalized by $\Phi(1)=1$ (in the case that $\Phi$  assumes the value $\infty$ we have $L^\Phi=L^\infty$, for which we refer to the thorough study by Bellini and Rosazza Gianin~\cite{BelliniRosazza2008}). For a given $\alpha\in(0,1)$ we consider the Orlicz function $\Phi_\alpha:=\frac{\Phi}{1-\alpha}$ and define the map $N_\alpha:L^\Phi\to[0,\infty)$ by
\[
N_\alpha(X) := \|X\|_{\Phi_\alpha}.
\]
For later convenience, we introduce for every $X\in L^\Phi$ the function $\pi_\alpha(X,\cdot):\R\to\R$ defined by
\[
\pi_\alpha(X,m) := m+N_\alpha((X-m)^+)
\]
where $X^+:=\max(X,0)$ for every $X\in L^\Phi$.

\begin{definition}
The {\em Haezendonck-Goovaerts premium principle} associated to $\Phi$ at level $\alpha$ is the map $\pi_\alpha:L^\Phi\to\R$ defined by
\[
\pi_\alpha(X) := \inf_{m\in\R}\{m+N_\alpha((X-m)^+)\} = \inf_{m\in\R}\pi_\alpha(X,m).
\]
\end{definition}

\smallskip

The following result collects a number of useful properties of Haezendonck-Goovaerts premium principles.

\begin{proposition}[{\cite[Proposition 3]{BelliniRosazza2012}}]
\label{prop: preliminary result}
For every $X\in L^\Phi$ the function $\pi_\alpha(X,\cdot)$ is convex and satisfies
\[
\lim_{m\to-\infty}\pi_\alpha(X,m) = \lim_{m\to\infty}\pi_\alpha(X,m) = \infty.
\]
In particular, $\pi_\alpha$ is finitely valued and for every $X\in L^\Phi$ there exists $m_X\in\R$ such that
\[
\pi_\alpha(X) = \pi_\alpha(X,m_X) = m_X+N_\alpha((X-m_X)^+).
\]
Moreover, $\pi_\alpha$ is sublinear, monotone increasing, translation invariant, and law invariant.
\end{proposition}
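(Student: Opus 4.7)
The plan is to dispatch the five assertions in order, with the only non-trivial step being coercivity of $\pi_\alpha(X,\cdot)$ as $m\to-\infty$.

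First, the convexity of $m\mapsto\pi_\alpha(X,m)$: the term $m$ is affine, and the map $m\mapsto(X-m)^+$ is convex from $\R$ into $L^\Phi_+$ since pointwise $a\mapsto(x-a)^+$ is convex. Composing with $N_\alpha$, which is a norm and in particular sublinear and monotone on $L^\Phi_+$, yields a convex function, and adding $m$ preserves convexity.

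For the limits, one direction is immediate: $\pi_\alpha(X,m)\geq m$ because $N_\alpha\geq 0$, so $\pi_\alpha(X,m)\to\infty$ as $m\to\infty$. The direction $m\to-\infty$ is where the main obstacle lies. Here I would exploit the observation that $\Phi(1)=1$, hence $\Phi_\alpha(1)=(1-\alpha)^{-1}>1$, which through the definition of the Luxemburg norm forces $c:=\|1\|_{\Phi_\alpha}>1$. For $m<\essinf X$ one has $(X-m)^+=X+|m|$ on $\Omega$, so the reverse triangle inequality for $N_\alpha$ gives
\[
N_\alpha((X-m)^+)\geq N_\alpha(|m|)-N_\alpha(X)=|m|\,c-N_\alpha(X),
\]
and therefore $\pi_\alpha(X,m)\geq m(1-c)-N_\alpha(X)\to\infty$ as $m\to-\infty$, since $1-c<0$. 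The strict inequality $c>1$ is the only genuinely delicate point in the whole proof; everything else follows from it together with general facts. Combined with the convexity already shown, $\pi_\alpha(X,\cdot)$ is convex and finite on $\R$, hence continuous, and it is coercive at both ends, so its infimum is attained at some $m_X\in\R$. Finite-valuedness of $\pi_\alpha$ follows from $\pi_\alpha(X)\leq\pi_\alpha(X,0)=N_\alpha(X^+)\leq N_\alpha(|X|)<\infty$.

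Finally, the algebraic properties are mechanical. Positive homogeneity: substitute $m'=m/\lambda$ in $\pi_\alpha(\lambda X,m)$ for $\lambda>0$. Subadditivity: use the pointwise bound $(X+Y-m_1-m_2)^+\leq(X-m_1)^++(Y-m_2)^+$, the triangle inequality for $N_\alpha$, and separate minimization over $m_1$ and $m_2$. Monotonicity: $X\leq Y$ implies $(X-m)^+\leq(Y-m)^+$, and $N_\alpha$ is monotone on $L^\Phi_+$. Translation invariance: shift the infimum variable. Law invariance: the Luxemburg norm depends on its argument only through its distribution, and $(X-m)^+$ is equidistributed with $(Y-m)^+$ whenever $X$ and $Y$ are.
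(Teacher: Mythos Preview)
The paper does not supply its own proof of this proposition; it simply cites Bellini and Rosazza Gianin~\cite{BelliniRosazza2012}. Your self-contained argument is sound in structure, and your identification of the inequality $N_\alpha(1)>1$ as the crux is exactly right---the paper itself invokes the same fact later, in the proof of Theorem~\ref{prop: lebesgue continuity}, where it writes $N_\alpha(1)=1/\Phi^{-1}(1-\alpha)$ with $\Phi^{-1}(1-\alpha)\in(0,1)$.

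There is one small gap in the $m\to-\infty$ step. You write ``for $m<\essinf X$ one has $(X-m)^+=X+|m|$'', but for a general $X\in L^\Phi$ one may have $\essinf X=-\infty$, so no such $m$ exists. The inequality you then use, $N_\alpha((X-m)^+)\geq |m|\,c-N_\alpha(X)$, is nonetheless valid for \emph{every} $m<0$ without that assumption: from $X-m\leq(X-m)^+$ and $-X\leq|X|$ one obtains the pointwise bound $|m|=-m\leq(X-m)^++|X|$, and since both sides are nonnegative, monotonicity and subadditivity of $N_\alpha$ give
\[
|m|\,c=N_\alpha(|m|)\leq N_\alpha\big((X-m)^+\big)+N_\alpha(X).
\]
With this adjustment your coercivity argument goes through verbatim, and the remaining parts (attainment of the infimum via convexity and coercivity, and the algebraic properties) are correct as written.
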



\section{Stability properties}

In the spirit of Haezendonck and Goovaerts~\cite[Theorem 4]{HaezendonckGoovaerts1982}, in this section we take up the study of a variety of continuity properties of Haezendonck-Goovaerts premium principles.

\smallskip

We start by showing that Haezendonck-Goovaerts premium principles are always Lipschitz continuous with respect to the Orlicz norm induced by $\Phi$. This property was established by Bellini and Rosazza Gianin~\cite[Proposition 4]{BelliniRosazza2012} in the special setting of Orlicz hearts using the dual representation of Haezendonck-Goovaerts premium principles. Since $\pi_\alpha$ is sublinear, Lipschitz continuity on a general Orlicz space follows from the general results by Farkas et al.~\cite[Proposition 3.6, Theorem 3.16]{FarkasKochMunari2014}.

\begin{proposition}
The Haezendonck-Goovaerts premium principle $\pi_\alpha$ is Lipschitz continuous. In particular, for all $X,Y\in L^\Phi$ we have
\[
|\pi_\alpha(X)-\pi_\alpha(Y)| \leq \frac{1}{1-\alpha}\|X-Y\|_\Phi. \
\]
\end{proposition}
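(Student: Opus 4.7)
The plan is to bound $|\pi_\alpha(X)-\pi_\alpha(Y)|$ directly by invoking sublinearity, then estimate $\pi_\alpha$ on arbitrary increments using the trivial test element $m=0$ in the defining infimum. Concretely, from Proposition~\ref{prop: preliminary result} we know that $\pi_\alpha$ is sublinear, so the standard inequalities $\pi_\alpha(X)\leq \pi_\alpha(Y)+\pi_\alpha(X-Y)$ and $\pi_\alpha(Y)\leq \pi_\alpha(X)+\pi_\alpha(Y-X)$ yield
\[
|\pi_\alpha(X)-\pi_\alpha(Y)| \leq \max\{\pi_\alpha(X-Y),\pi_\alpha(Y-X)\}.
\]
It therefore suffices to prove the one-sided estimate $\pi_\alpha(Z)\leq \frac{1}{1-\alpha}\|Z\|_\Phi$ for every $Z\in L^\Phi$.

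For the latter, I would take $m=0$ in the infimum defining $\pi_\alpha(Z)$ and use monotonicity of the Luxemburg norm to get
\[
\pi_\alpha(Z) \leq 0 + N_\alpha(Z^+) = \|Z^+\|_{\Phi_\alpha} \leq \|Z\|_{\Phi_\alpha},
\]
and then compare the two Luxemburg norms $\|\cdot\|_\Phi$ and $\|\cdot\|_{\Phi_\alpha}$. The key observation is the convexity estimate: for every $c\geq 1$, using $\Phi(0)=0$, one has $\Phi(x/c)\leq \Phi(x)/c$. Applied pointwise with $c=\frac{1}{1-\alpha}>1$ and $\lambda=\|Z\|_\Phi$, this gives
\[
\E\!\left[\Phi_\alpha\!\left(\frac{|Z|}{c\lambda}\right)\right]
= \frac{1}{1-\alpha}\E\!\left[\Phi\!\left(\frac{|Z|}{c\lambda}\right)\right]
\leq \frac{1}{c(1-\alpha)}\E\!\left[\Phi\!\left(\frac{|Z|}{\lambda}\right)\right]
\leq 1,
\]
where the last step uses $\E[\Phi(|Z|/\lambda)]\leq 1$ by definition of the Luxemburg norm. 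Hence $\|Z\|_{\Phi_\alpha}\leq c\lambda=\frac{1}{1-\alpha}\|Z\|_\Phi$, and chaining the inequalities delivers the Lipschitz bound.

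There is essentially no obstacle here beyond checking the convexity estimate $\Phi(x/c)\leq \Phi(x)/c$, which follows from $\Phi(0)=0$ by writing $x/c = (1/c)\cdot x + (1-1/c)\cdot 0$. The alternative, hinted at in the paragraph preceding the statement, would be to appeal to the general Lipschitz theory for sublinear monotone translation-invariant functionals on Banach lattices in Farkas, Koch and Munari~\cite{FarkasKochMunari2014}, but the direct argument above is so short that it seems preferable to record it explicitly, as it also pins down the sharp constant $\frac{1}{1-\alpha}$.
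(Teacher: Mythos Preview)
Your proof is correct and follows essentially the same approach as the paper: both arguments reduce, via sublinearity, to the one-sided bound $\pi_\alpha(Z)\leq \frac{1}{1-\alpha}\|Z\|_\Phi$, obtain it by evaluating the defining infimum at $m=0$, and then compare $N_\alpha$ with $\|\cdot\|_\Phi$ through the convexity inequality $\Phi((1-\alpha)x)\leq(1-\alpha)\Phi(x)$ (your $\Phi(x/c)\leq\Phi(x)/c$ with $c=\tfrac{1}{1-\alpha}$). The only cosmetic difference is that the paper routes through $\pi_\alpha(X)\leq\pi_\alpha(|X|)\leq N_\alpha(|X|)$ using monotonicity of $\pi_\alpha$, whereas you use monotonicity of the Luxemburg norm on $Z^+\leq|Z|$; both are equivalent here.
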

\begin{proof}
Since $1-\alpha\in(0,1)$, the convexity of $\Phi$ implies that $\Phi((1-\alpha)x)\leq(1-\alpha)\Phi(x)$ for every $x\in[0,\infty)$. As a result, for every $X\in L^\Phi$ and every $\lambda\in(0,\infty)$ we have
\[
\E\left[\Phi\left(\frac{|X|}{\lambda}\right)\right]\leq1 \ \implies \ \E\left[\Phi_\alpha\left(\frac{(1-\alpha)|X|}{\lambda}\right)\right]\leq1,
\]
showing that $N_\alpha(X)\leq\frac{1}{1-\alpha}\|X\|_\Phi$. The monotonicity of $\pi_\alpha$ now yields
\[
\pi_\alpha(X) \leq \pi_\alpha(|X|) \leq N_\alpha(|X|) \leq \frac{1}{1-\alpha}\|X\|_\Phi
\]
for every $X\in L^\Phi$. Then, it follows from subadditivity of $\pi_\alpha$ that
\[
\pi_\alpha(X)-\pi_\alpha(Y) \leq \pi_\alpha(X-Y) \leq \frac{1}{1-\alpha}\|X-Y\|_\Phi.
\]
for all $X,Y\in L^\Phi$. This establishes the desired statement.
\end{proof}

\smallskip

\begin{remark}
The above proof shows that $\pi_\alpha$ is Lipschitz continuous with constant $1$, i.e.\ is nonexpansive, with respect to the norm $N_\alpha$. This is in line with the general result by Pichler~\cite[Corollary 3.7]{Pichler2017}.
\end{remark}

\smallskip

As Lipschitz continuity is generally not strong enough to imply tractable dual representations, we turn to study continuity properties  with respect to the  {\em order structure} of $L^\Phi$. More precisely, we focus on the Fatou property, which corresponds to lower semicontinuity with respect to dominated almost-sure convergence, and on the stronger Lebesgue property, which corresponds to continuity with respect to dominated almost-sure convergence.

\smallskip

We start by showing that Haezendonck-Goovaerts principles are always continuous from below and satisfy the Fatou property. In fact, we show that they satisfy a stronger version of the Fatou property where the domination condition is not required at the level of the random variables but only of the corresponding expectations. We write $X_n\uparrow X$ to mean that the sequence $(X_n)\subset L^\Phi$ is increasing and converges almost surely to $X\in L^\Phi$.

\begin{theorem}
\label{prop: fatou}
The Haezendonck-Goovaerts premium principle $\pi_\alpha$ satisfies the following properties:
\begin{enumerate}[(i)]
  \item $\pi_\alpha$ is continuous from below, i.e.\ for every sequence $(X_n)\subset L^\Phi$ and every $X\in L^\Phi$ we have
\[
X_n\uparrow X \ \implies \ \pi_\alpha(X_n)\to\pi_\alpha(X).
\]
  \item $\pi_\alpha$ satisfies the Fatou property, i.e.\ for every sequence $(X_n)\subset L^\Phi$ such that $\sup|X_n|\in L^\Phi$ and for every $X\in L^\Phi$ we have
\[
X_n\xrightarrow{a.s.}X \ \implies \ \pi_\alpha(X)\leq\liminf\pi_\alpha(X_n).
\]
\item For every sequence $(X_n)\subset L^\Phi$ such that $\inf_{n\in\N}\mathbb{E}[X_n]>-\infty$ and for every $X\in L^\Phi$ we have
\[
X_n\xrightarrow{a.s.}X \ \implies \ \pi_\alpha(X)\leq\liminf\pi_\alpha(X_n).
\]
\end{enumerate}
\end{theorem}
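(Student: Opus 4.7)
My plan is to establish (iii) first, as it is the strongest of the three statements, and then derive (ii) and (i) as easy consequences. For (iii), assume without loss of generality that $L := \liminf_n \pi_\alpha(X_n) < \infty$ and, after passing to a subsequence, that $\pi_\alpha(X_n) \to L$. By Proposition \ref{prop: preliminary result}, for each $n$ pick $m_n \in \R$ with $\pi_\alpha(X_n) = m_n + N_\alpha((X_n - m_n)^+)$. The central task is to show that $(m_n)$ is bounded.

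The upper bound $m_n \leq \pi_\alpha(X_n)$ is immediate from $N_\alpha \geq 0$. For the lower bound, which is where the hypothesis $\inf_n \E[X_n] > -\infty$ comes in, I would first establish the coercivity estimate
\[
N_\alpha(Y) \;\geq\; \beta\, \E[Y] \qquad \text{for every } Y \in L^\Phi \text{ with } Y \geq 0,
\]
where $\beta := 1/\Phi^{-1}(1-\alpha)$. This follows by applying Jensen's inequality to the convex function $\Phi_\alpha$ at $Y/\lambda$ for $\lambda > N_\alpha(Y)$, using that $\E[\Phi_\alpha(Y/\lambda)] \leq 1$ for such $\lambda$. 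Crucially $\beta > 1$, because $\Phi(1) = 1 > 1-\alpha$ together with left-continuity of $\Phi$ forces $\Phi^{-1}(1-\alpha) < 1$. Applied to $(X_n - m_n)^+ \geq X_n - m_n$, the coercivity estimate yields $\pi_\alpha(X_n) \geq (1-\beta) m_n + \beta\,\E[X_n]$, whence $m_n \geq (\beta\,\E[X_n] - \pi_\alpha(X_n))/(\beta - 1)$ is bounded below.

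Now pass to a further subsequence with $m_n \to m^* \in \R$; then $(X_n - m_n)^+ \to (X - m^*)^+$ almost surely. I would next invoke the Fatou property of the Luxemburg norm: whenever $Y_n \to Y$ almost surely with $Y_n \geq 0$, one has $\|Y\|_{\Phi_\alpha} \leq \liminf \|Y_n\|_{\Phi_\alpha}$. This is immediate from the classical Fatou lemma applied to the a.s.\ convergent nonnegative sequence $\Phi_\alpha(Y_n/\lambda)$ (whose a.s.\ convergence uses that $\Phi$ is finite-valued, hence continuous on $[0,\infty)$) for any $\lambda$ exceeding $\liminf \|Y_n\|_{\Phi_\alpha}$. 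Putting the pieces together,
\[
\pi_\alpha(X) \;\leq\; m^* + N_\alpha((X-m^*)^+) \;\leq\; \liminf_n \bigl(m_n + N_\alpha((X_n - m_n)^+)\bigr) \;=\; L,
\]
which proves (iii). Statement (ii) follows at once since $\sup_n|X_n| \in L^\Phi \subset L^1$ forces $\inf_n \E[X_n] \geq -\E[\sup_n|X_n|] > -\infty$; and (i) follows from (ii) because $X_n \uparrow X$ yields $\sup_n|X_n| \leq X^+ + X_1^- \in L^\Phi$, so (ii) gives $\pi_\alpha(X) \leq \liminf \pi_\alpha(X_n)$, while monotonicity of $\pi_\alpha$ supplies the reverse inequality.

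The principal obstacle is the lower bound on $(m_n)$. Without the strict coercivity constant $\beta > 1$ (equivalently, without the strict gap $\Phi(1) > 1-\alpha$ guaranteed by $\alpha > 0$) one cannot prevent the minimizers from escaping to $-\infty$ under only integral control on $X_n$; this is the key step that links the $\Delta_2$-free norm estimates to the expected-value hypothesis.
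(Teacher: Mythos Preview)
Your proof is correct and follows essentially the same architecture as the paper's: establish (iii) first, choose minimizers $m_n$, show they are bounded, extract a convergent subsequence, and conclude via the Fatou property of the Luxemburg norm; then deduce (ii) and (i). The only notable difference lies in how boundedness of $(m_n)$ is obtained. The paper first reduces from $X_n$ to the constant $\E[X_n]$ via the Jensen-type inequality $N_\alpha((\E[X]-m)^+)\leq N_\alpha((X-m)^+)$, and then argues by contradiction using the abstract coercivity $\pi_\alpha(0,m)\to\infty$ as $|m|\to\infty$ from Proposition~\ref{prop: preliminary result}. You instead prove the explicit estimate $N_\alpha(Y)\geq\beta\,\E[Y]$ with $\beta=1/\Phi^{-1}(1-\alpha)>1$, which yields the quantitative lower bound $m_n\geq(\beta\,\E[X_n]-\pi_\alpha(X_n))/(\beta-1)$ directly. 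Both arguments ultimately rest on the same strict inequality $\Phi^{-1}(1-\alpha)<1$; your route is a bit more transparent in isolating this constant, while the paper's route keeps the coercivity packaged in Proposition~\ref{prop: preliminary result}.
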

\begin{proof}
It is well known that {\em (i)} and {\em (ii)} are equivalent for any monotone functional like $\pi_\alpha$. Note that there exists a constant $k\in\R$ such that $\mathbb{E}[\abs{X}]\leq k \norm{X}_{\Phi}$ for every $X \in L^{\Phi}$; see, e.g., Edgar and Sucheston~\cite[Proposition 2.2.1]{EdgarSucheston1992}. Thus, clearly, {\em (iii)} implies {\em (ii)}. It remains to establish {\em (iii)}. We observe first that, for all $X \in L^{\Phi}$ and $m \in \mathbb{R}$, Jensen's inequality yields that $(\mathbb{E}[X]-m)^+\leq \mathbb{E}[(X-m)^+]$. Hence, by Edgar and Sucheston~\cite[Corollary 2.3.11]{EdgarSucheston1992}, we have
\begin{equation}\label{DMeqaa}
N_{\alpha}((\mathbb{E}[X]-m)^+) \leq N_{\alpha}(\E[(X-m)^+])\leq N_\alpha((X-m)^+).
\end{equation}
This implies that for every $X\in L^\Phi$
\begin{equation}\label{DMeq}
\pi_{\alpha}(\mathbb{E}[X])\leq \pi_{\alpha}(X).
\end{equation}
Now, let $(X_n)\subset L^\Phi$ and $X \in L^\Phi$ be such that $\inf_n\mathbb{E}[X_n]>-\infty$ and $X_n \xrightarrow{{a.s.}} X$. If $\liminf_n\pi_\alpha(X_n)=\infty$, then there is nothing to prove. Thus, by passing to a subsequence, we may assume that
\[
\pi_\alpha(X_n) \to \liminf\pi_\alpha(X_n) \in [-\infty,\infty).
\]
We claim that $\sup_n\abs{\mathbb{E}[X_n]}<\infty$. If this is not the case, then we find a subsequence $(X_{n_k})$ such that $\mathbb{E}[X_{n_k}]\rightarrow \infty$. This is because $\inf_n\mathbb{E}[X_n]>-\infty$. By \eqref{DMeq} and the positive homogeneity of $\pi_{\alpha}$, it follows that $\pi_{\alpha}(X_{n_k})\geq \pi_\alpha(\E[X_{n_k}])= \mathbb{E}[X_{n_k}]\pi_{\alpha}(1)=\mathbb{E}[X_{n_k}]$ for $k$ large enough. As a result, we get $\pi_{\alpha}(X_{n_k})\to\infty$, a contradiction. This shows that $\sup_n\abs{\mathbb{E}[X_n]}<\infty$. Now, Proposition \ref{prop: preliminary result} asserts that for every $n\in\N$ there exists $m_n\in\R$ such that
\[
\pi_\alpha(X_n,m_n)=\pi_\alpha(X_n).
\]
We claim that $(m_n)$ is bounded. To this effect, suppose otherwise that $(m_n)$ is unbounded. Extract a subsequence $(m_{n_k})$ such that $\abs{m_{n_k}} \rightarrow \infty$. Since $\sup\abs{\mathbb{E}[X_n]}<\infty$, we must have $\abs{m_{n_k}-\mathbb{E}[X_{n_k}]}\to\infty$. Note that, by \eqref{DMeqaa}, we have
\[
\pi_\alpha(X_{n_k},m_{n_k})=N_{\alpha}((X_{n_k}-m_{n_k})^+)+m_{n_k} \geq N_{\alpha}((\mathbb{E}[X_{n_k}]-m_{n_k})^+)+m_{n_k}-\mathbb{E}[X_{n_k}]
+\mathbb{E}[X_{n_k}].
\]
Hence, we can write
\[
\pi_\alpha(X_{n_k},m_{n_k}) \geq \pi_{\alpha}(0,m_{n_k}-\mathbb{E}[X_{n_k}])+\mathbb{E}[X_{n_k}].
\]
Since $\abs{m_{n_k}-\mathbb{E}[X_{n_k}]}\to\infty$, it follows from Proposition~\ref{prop: preliminary result} that $\pi_{\alpha}(X_{n_k})\to\infty$, a contradiction. This proves that $(m_n)$ is bounded. To conclude the proof, we extract a subsequence $(m_{n_k})$ such that $m_{n_k} \rightarrow m \in \mathbb{R}$. Then, $(X_{n_k}-m_{n_k})^+ \xrightarrow{{a.s.}} (X-m)^+$. It follows from Zaanen~\cite[Theorem 131.6]{Zaanen1983} that
\[
N_{\alpha}((X-m)^+)\leq \liminf_k N_{\alpha}((X_{n_k}-m_{n_k})^+).
\]
As a consequence, we finally obtain
\begin{align*}
\pi_{\alpha}(X)
\leq&
\,\,N_{\alpha}((X-m)^+)+m \leq \liminf_k(N_{\alpha}((X_{n_k}-m_{n_k})^+)  +m_{n_k}) \\
=&
\,\liminf_k\pi_\alpha(X_{n_k},m_{n_k})=\liminf_k\pi_\alpha(X_{n_k})=\lim_n\pi_\alpha(X_{n})
=\liminf_n \pi_{\alpha}(X_n).
\end{align*}
This establishes {\em (iii)} and completes the proof.
\end{proof}

\smallskip

We proceed to study the stronger Lebesgue property and continuity from above. We write $X_n\downarrow X$ to mean that the sequence $(X_n)\subset L^\Phi$ is decreasing and converges almost surely to $X\in L^\Phi$.

\begin{theorem}
\label{prop: lebesgue continuity}
For the Haezendonck-Goovaerts premium principle $\pi_\alpha$ the following statements are equivalent:
\begin{enumerate}[(i)]
  \item $\pi_\alpha$ is continuous from above, i.e.\ for every sequence $(X_n)\subset L^\Phi$ and every $X\in L^\Phi$ we have
\[
X_n\downarrow X \ \implies \ \pi_\alpha(X_n)\to\pi_\alpha(X).
\]
  \item $\pi_\alpha$ satisfies the Lebesgue property, i.e.\ for every sequence $(X_n)\subset L^\Phi$ such that $\sup|X_n|\in L^\Phi$ and for every $X\in L^\Phi$ we have
\[
X_n\xrightarrow{a.s.}X \ \implies \ \pi_\alpha(X_n)\to\pi_\alpha(X).
\]
  \item $\Phi$ satisfies the $\Delta_2$ condition.
\end{enumerate}
\end{theorem}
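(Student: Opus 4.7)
The plan is to prove the chain (ii)$\Rightarrow$(i), (iii)$\Rightarrow$(ii), and (i)$\Rightarrow$(iii). The implication (ii)$\Rightarrow$(i) is immediate since a decreasing sequence $X_n\downarrow X$ is automatically dominated by $|X_1|+|X|\in L^\Phi$. For (iii)$\Rightarrow$(ii), I would combine the Lipschitz estimate from Proposition~3.1 with the standard fact that the Luxemburg norm on $H^\Phi$ is order continuous. Under $\Delta_2$ one has $L^\Phi=H^\Phi$ by the result recalled in Section~2; thus, for any a.s.-convergent sequence $X_n\to X$ in $L^\Phi$ with $\sup_n|X_n|\leq Y\in L^\Phi$, putting $W_n:=\sup_{k\geq n}|X_k-X|\downarrow 0$ yields $\|W_n\|_\Phi\to 0$, hence $\|X_n-X\|_\Phi\to 0$, and the Lipschitz bound transfers this convergence to $\pi_\alpha$.

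The substantive direction is (i)$\Rightarrow$(iii), which I will establish contrapositively. Assuming $\Phi$ fails $\Delta_2$, the plan is to exhibit $X_n\downarrow 0$ in $L^\Phi$ on which $\pi_\alpha$ stays bounded away from $0$. Since $\Phi$ is finite, nonconstant, and convex, $\Phi(x)\to\infty$ as $x\to\infty$; together with failure of $\Delta_2$, this lets us inductively select $x_n\uparrow\infty$ with $\Phi(x_n)\geq 1$ and $\Phi(2x_n)\geq 2^n\Phi(x_n)$. Exploiting nonatomicity of $\probp$, pick pairwise disjoint $A_n\in\cF$ with $\probp(A_n)=2^{-n}/\Phi(x_n)$ (these satisfy $\sum\probp(A_n)\leq 1$), and set
\[
X_n:=\sum_{k\geq n}x_k\one_{A_k}.
\]
A direct computation gives $\E[\Phi(X_1)]=\sum 2^{-n}\leq 1$, so $X_1\in L^\Phi$, while $X_n\downarrow 0$ almost surely. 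The key inequality coming out of the construction is
\[
N_\alpha(\one_{A_n})=\frac{1}{\Phi^{-1}\!\bigl((1-\alpha)/\probp(A_n)\bigr)}\geq\frac{1}{2x_n},
\]
an immediate consequence of $\Phi(2x_n)\geq 2^n\Phi(x_n)\geq(1-\alpha)\cdot 2^n\Phi(x_n)$.

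By monotonicity, $\pi_\alpha(X_n)\geq\pi_\alpha(x_n\one_{A_n})$, so it suffices to bound $\pi_\alpha(Y_n)$ below by a fixed positive constant for $Y_n:=x_n\one_{A_n}$. I plan to split $\pi_\alpha(Y_n)=\inf_m(m+N_\alpha((Y_n-m)^+))$ into three regimes. For $m\geq x_n$, trivially $\pi_\alpha(Y_n,m)\geq m\geq x_n$. For $0\leq m\leq x_n$, the displayed inequality gives $\pi_\alpha(Y_n,m)\geq m+(x_n-m)/(2x_n)\geq 1/2$. For $m<0$, combine the pointwise estimate $(Y_n-m)^+\geq -m$, which yields $N_\alpha((Y_n-m)^+)\geq|m|\,N_\alpha(1)$ with $N_\alpha(1)=1/\Phi^{-1}(1-\alpha)>1$ (since $\Phi(1)=1>1-\alpha$), with the monotonicity bound $N_\alpha((Y_n-m)^+)\geq N_\alpha(Y_n)\geq 1/2$; taking the maximum of the two resulting lower bounds and minimising over $m<0$ produces the uniform positive lower bound $(1-1/N_\alpha(1))/2$. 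This forces $\liminf_n\pi_\alpha(X_n)>0=\pi_\alpha(0)$, contradicting (i). The main obstacle is precisely this last regime $m<0$, where the two competing lower bounds must be carefully balanced to deliver a constant independent of $n$.
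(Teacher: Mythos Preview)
Your argument is correct. The three implications are all valid: (ii)$\Rightarrow$(i) is trivial; your (iii)$\Rightarrow$(ii) via Lipschitz continuity and order continuity of $\|\cdot\|_\Phi$ on $H^\Phi=L^\Phi$ works; and the explicit counterexample for $\neg$(iii)$\Rightarrow\neg$(i) checks out, including the delicate $m<0$ regime where the two lower bounds $|m|(N_\alpha(1)-1)$ and $\tfrac12-|m|$ indeed intersect to give the uniform constant $\tfrac12\bigl(1-1/N_\alpha(1)\bigr)>0$.

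The route, however, is genuinely different from the paper's. For (iii)$\Rightarrow$(ii) the paper simply cites known results (Bellini--Rosazza Gianin together with Orihuela--Ruiz Gal\`{a}n), whereas you give a self-contained norm argument. The more interesting contrast is in (i)$\Rightarrow$(iii). The paper does not construct an explicit sequence: it invokes Zaanen's characterisation of order continuity to obtain \emph{some} $X_n\downarrow 0$ with $N_\alpha(X_n)\geq\delta$, and then, assuming continuity from above, analyses the optimal scalars $m_n$ realising $\pi_\alpha(X_n)=\pi_\alpha(X_n,m_n)$. Using convexity of $\pi_\alpha(X_n,\cdot)$ and an affine extrapolation to $m=-1$, it derives a contradiction from $m_n\to 0$ and $\pi_\alpha(X_n,0)\geq\delta$. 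Your approach instead builds the sequence by hand from the failure of $\Delta_2$ and lower-bounds $\pi_\alpha(x_n\one_{A_n},m)$ directly in each regime of $m$. What you gain is an elementary, fully self-contained proof with no external references; what the paper's argument gains is brevity and the observation that the conclusion depends only on the abstract failure of order continuity of $N_\alpha$, not on any particular construction.
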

\begin{proof}
Suppose that $\Phi$ satisfies the $\Delta_2$ condition and recall that $L^\Phi=H^\Phi$ in this case. That $\pi_\alpha$ satisfies the Lebesgue property follows from Bellini and Rosazza Gianin~\cite[Proposition 4]{BelliniRosazza2012} and Orihuela and Ruiz Gal\`{a}n~\cite[Theorem 1]{OrihuelaRuiz2012}. This shows that {\em (iii)} implies {\em (ii)}. It is immediate to see that {\em (ii)} implies {\em (i)}. We conclude by showing that {\em (i)} implies {\em (iii)}. To this effect, assume that $\pi_\alpha$ is continuous from above but $\Phi$ does not satisfy the $\Delta_2$ condition. In this case, $\Phi_\alpha$ also fails to satisfy the $\Delta_2$ condition. Then, it follows from Zaanen~\cite[Theorem 133.4]{Zaanen1983} that we find a sequence $(X_n)\subset L^\Phi$ and a scalar $\delta\in(0,\infty)$ such that $X_n\downarrow0$ and $N_\alpha(X_n)\geq\delta$ for every $n\in\N$. By Proposition \ref{prop: preliminary result}, for every $n\in\N$ we find $m_n\in\R$ such that $\pi_\alpha(X_n,m_n)=\pi_\alpha(X_n)$. Note that
\begin{equation}
\label{eq: lebesgue continuity}
\pi_\alpha(X_n,m_n) = \pi_\alpha(X_n) \to \pi_\alpha(0) = 0
\end{equation}
by continuity from above. Since $\pi_\alpha(X_n,0)=N_\alpha(X_n)\geq\delta$ for every $n\in\N$, we see that only finitely many $m_n$'s can be zero. Hence, we can always assume without loss of generality that $(m_n)$ consists either of strictly-positive or of strictly-negative numbers. We first focus on the strictly-positive case. Assume that $m_n>0$ for every $n\in\N$. We claim that $m_n\to0$. Indeed, otherwise, there exist $\varepsilon\in(0,\infty)$  and a subsequence $(m_{n_k})$ such that $m_{n_k}\geq\varepsilon$ for every $k\in\N$. Then
\[
\pi_\alpha(X_{n_k},m_{n_k}) \geq \pi_\alpha(0,m_{n_k}) = m_{n_k} \geq \varepsilon
\]
for every $k\in\N$, which contradicts \eqref{eq: lebesgue continuity}. This proves the claim. Now, for each $n\in\N$ the affine function $f_n:\R\to\R$ defined by
\[
f_n(m) = \frac{\pi_\alpha(X_n,m_n)-\pi_\alpha(X_n,0)}{m_n}m+\pi_\alpha(X_n,0)
\]
is easily seen to satisfy $f_n(0)=\pi_\alpha(X_n,0)$ and $f_n(m_n)=\pi_\alpha(X_n,m_n)$. Hence, it follows from the convexity of $\pi_\alpha(X_n,\cdot)$ that $\pi_\alpha(X_n,-1)\geq f_n(-1)$. This yields
\[
\pi_\alpha(X_1,-1) \geq \pi_\alpha(X_n,-1) \geq f_n(-1) \geq \frac{\delta-\pi_\alpha(X_n,m_n)}{m_n}+\delta
\]
for every $n\in\N$. However, this is clearly not possible because the right-hand side explodes as $n\to\infty$. As a result, our initial assumption that $\Phi$ does not satisfy the $\Delta_2$ condition is not tenable. Next, we focus on the strictly-negative case. Assume that $m_n<0$ for every $n\in\N$. We again show that $m_n\to0$. Otherwise, there exist $\varepsilon\in(0,\infty)$  and a subsequence $(m_{n_k})$ such that $m_{n_k}\leq-\varepsilon$ for every $k\in\N$ and thus
\begin{align*}
\pi_\alpha(X_{n_k},m_{n_k})
\geq&
\,\,\pi_\alpha(0,m_{n_k})
=
m_{n_k}+N_\alpha(-m_{n_k})
=
(-m_{n_k})\left(\frac{1}{\Phi^{-1}(1-\alpha)}-1\right) \\
\geq&
\left(\frac{1}{\Phi^{-1}(1-\alpha)}-1\right)\varepsilon >
0
\end{align*}
for every $k\in\N$, contradicting \eqref{eq: lebesgue continuity} as well. Here, we have used the fact that
\[
\Phi^{-1}(1-\alpha) := \sup\big\{x\in[0,\infty) \,; \ \Phi(x)\leq1-\alpha\big\} \in (0,1).
\]
From now on we can argue as in the strictly-positive case. In sum, we have established that {\em (i)} must imply {\em (iii)}, concluding the proof.
\end{proof}

\smallskip

Since the Haezendonck-Goovaerts principle $\pi_\alpha$ is law invariant, it can be regarded as a functional defined on the set of probability measures associated to random variables in $L^\Phi$. It is therefore natural to also study continuity properties in this setting. In the spirit of Kr\"{a}tschmer et al.~\cite{KratschmerSchiedZahle2014}, we focus on (semi)continuity with respect to the so-called $\Phi$-weak convergence. In line with the language of this note, we formulate our statements in terms of random variables instead of probability measures. To this effect, we write $X_n\xrightarrow{dist.}X$ to mean that the sequence $(X_n)\subset L^\Phi$ converges in distribution to $X\in L^\Phi$. Recall from \cite{KratschmerSchiedZahle2014} that $\Phi$-weak convergence can be properly formulated only on the Young class
\[
Y^\Phi := \left\{X\in L^0 \,; \ \E\left[\Phi\left(|X|\right)\right]<\infty\right\}.
\]
Clearly, $H^\Phi\subset Y^\Phi\subset L^\Phi$. For a sequence $(X_n)\subset Y^\Phi$ and $X\in Y^\Phi$ we say that $(X_n)$ $\Phi$-converges in distribution to $X$ whenever (see \cite[Lemma A.1]{KratschmerSchiedZahle2014})
\[
X_n\xrightarrow{\Phi\text{-}dist.}X \ :\iff \ X_n\xrightarrow{dist.}X \ \ \mbox{and} \ \  \E[\Phi(|X_n|)]\to\E[\Phi(|X|)].
\]

\smallskip

Our first result shows that Haezendonck-Goovaerts premium principles are always lower semicontinuous with respect to $\Phi$-convergence in distribution.

\begin{proposition}
The Haezendonck-Goovaerts premium principle $\pi_\alpha$, restricted to $Y^\Phi$, is lower semicontinuous with respect to $\Phi$-convergence in distribution, i.e.\ for every sequence $(X_n)\subset Y^\Phi$ and every $X\in Y^\Phi$ we have
\[
X_n\xrightarrow{\Phi\text{-}dist.}X \ \implies \ \pi_\alpha(X)\leq \liminf \pi_\alpha(X_n).
\]
\end{proposition}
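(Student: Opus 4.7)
The plan is to reduce the statement to the Fatou-type bound of Theorem~\ref{prop: fatou}(iii) by combining Skorokhod's representation theorem with the law invariance of $\pi_\alpha$. First I would apply Skorokhod's theorem to the convergence $X_n\xrightarrow{dist.}X$ to obtain random variables $\widetilde X_n$ and $\widetilde X$ on some nonatomic probability space with $\widetilde X_n\stackrel{d}{=}X_n$, $\widetilde X\stackrel{d}{=}X$, and $\widetilde X_n\to\widetilde X$ almost surely. By law invariance (Proposition~\ref{prop: preliminary result}), the values of $\pi_\alpha$ on the original and tilded random variables agree, so the task reduces to showing $\pi_\alpha(\widetilde X)\leq \liminf_n \pi_\alpha(\widetilde X_n)$.

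Next I would verify the hypothesis $\inf_n \E[\widetilde X_n]>-\infty$ of Theorem~\ref{prop: fatou}(iii). The $\Phi$-convergence assumption gives $\E[\Phi(|\widetilde X_n|)]\to \E[\Phi(|\widetilde X|)]<\infty$, so there exists $M\geq 1$ with $\E[\Phi(|\widetilde X_n|)]\leq M$ for every $n$. Since $\Phi$ is convex with $\Phi(0)=0$, one has $\Phi(x/M)\leq \Phi(x)/M$ for every $x\in[0,\infty)$, which gives $\E[\Phi(|\widetilde X_n|/M)]\leq 1$ and thus $\|\widetilde X_n\|_\Phi\leq M$. The norm-continuous embedding $L^\Phi\hookrightarrow L^1$ then supplies a constant $k$ with $\E[|\widetilde X_n|]\leq kM$ for all $n$, and in particular $\inf_n\E[\widetilde X_n]\geq -kM>-\infty$.

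Finally I would invoke Theorem~\ref{prop: fatou}(iii) applied to the almost-sure convergence $\widetilde X_n\to\widetilde X$ to deduce $\pi_\alpha(\widetilde X)\leq \liminf_n \pi_\alpha(\widetilde X_n)$, which by the first step is the desired inequality. The only step that calls for any real work is the uniform $L^\Phi$-norm bound extracted from $\Phi$-weak convergence via the Luxemburg norm; once this is in hand, the Skorokhod-plus-law-invariance reduction is entirely routine in this setting.
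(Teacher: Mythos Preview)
Your proposal is correct and follows essentially the same route as the paper's proof: Skorohod representation plus law invariance to reduce to almost-sure convergence, then verification of the hypothesis $\inf_n\E[\widetilde X_n]>-\infty$, then an appeal to Theorem~\ref{prop: fatou}(iii). The only difference is in the sub-argument for the uniform $L^1$ bound: the paper obtains it in one line via Jensen's inequality ($\E[\Phi(|Y_{n_k}|)]\geq \Phi(\E[|Y_{n_k}|])$), whereas you pass through a uniform Luxemburg-norm bound and then invoke the continuous embedding $L^\Phi\hookrightarrow L^1$; both are elementary and equally valid. One small clarification worth making explicit: since the underlying probability space is already nonatomic, the Skorohod representation can be realized on the \emph{same} space $(\Omega,\cF,\probp)$, so that $\widetilde X_n,\widetilde X\in L^\Phi$ and $\pi_\alpha$ applies directly without needing to redefine the premium principle on an auxiliary space.
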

\begin{proof}
Let $(X_n)\subset Y^\Phi$ and $X\in Y^\Phi$ be such that $X_n\xrightarrow{\Phi\text{-}dist.}X$.
Since our probability space is nonatomic, the classical Skorohod representation yields $(Y_n)\subset L^\Phi$ and $Y\in L^\Phi$ such that $X$ and $Y$ have the same distribution, $X_n$ and $Y_n$ have the same distribution for every $n\in\N$, and $Y_n\xrightarrow{a.s.}Y$. Clearly,
\begin{equation}\label{wlseq}
\mathbb{E}[\Phi(|Y|)]= \mathbb{E}[\Phi(|X|)]= \lim\mathbb{E}[\Phi(|X_n|)]=\lim\mathbb{E}[\Phi(|Y_n|)]<\infty.
\end{equation}
Observe that $\inf_n \mathbb{E}[Y_n] >-\infty$. Indeed, if $\inf_n \mathbb{E}[Y_n]=-\infty$, then we find a subsequence $(Y_{n_k})$ such that $\mathbb{E}[|Y_{n_k}|] \rightarrow \infty$. Therefore, by Jensen's inequality, $\mathbb{E}[\Phi(|Y_{n_k}|)]\geq \Phi[\mathbb{E}[|Y_{n_k}|] \rightarrow \infty$, contradicting \eqref{wlseq}. As a consequence, Theorem~\ref{prop: fatou}(i) yields that $\pi_\alpha(X)=\pi_\alpha(Y)\leq \liminf\pi_\alpha(Y_n)=\liminf \pi_\alpha(X_n)$.
\end{proof}

\smallskip

In Kr\"{a}tschmer et al.~\cite[Theorem 2.8]{KratschmerSchiedZahle2014} it was proved that {\em all} law-invariant convex risk measures are continuous with respect to $\Phi$-convergence in distribution on the Orlicz heart $H^\Phi$ if and only if $\Phi$ is $\Delta_2$. The following result shows that it is enough to check continuity for the Haezendonck-Goovaerts premium principle $\pi_\alpha$ to know whether $\Phi$ is $\Delta_2$ or not.

\begin{proposition}
For the Haezendonck-Goovaerts premium principle $\pi_\alpha$ the following statements are equivalent:
\begin{enumerate}[(i)]
\item[(i)] $\pi_\alpha$ is continuous on $Y^\Phi$ with respect to $\Phi$-convergence in distribution.
\item[(ii)] $\pi_\alpha$ is continuous on $H^\Phi$ with respect to $\Phi$-convergence in distribution.
\item[(iii)] $\Phi$ satisfies the $\Delta_2$ condition.
\end{enumerate}
\end{proposition}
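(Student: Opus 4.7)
The plan is to prove the cycle (iii) $\Rightarrow$ (i) $\Rightarrow$ (ii) $\Rightarrow$ (iii). For (iii) $\Rightarrow$ (i): when $\Phi$ satisfies the $\Delta_2$ condition, the excerpt gives $L^\Phi=H^\Phi$, and iterating $\Phi(2x)\le k\Phi(x)$ shows that $Y^\Phi\subseteq H^\Phi$; combined with the obvious inclusions, this yields $Y^\Phi=H^\Phi=L^\Phi$. Since $\pi_\alpha$ is a law-invariant convex risk measure, Kr\"{a}tschmer--Schied--Z\"{a}hle \cite[Theorem 2.8]{KratschmerSchiedZahle2014} gives continuity on $H^\Phi=Y^\Phi$ with respect to $\Phi$-convergence in distribution, establishing (i). The implication (i) $\Rightarrow$ (ii) is trivial since $H^\Phi\subseteq Y^\Phi$.

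The substantive implication is (ii) $\Rightarrow$ (iii), which I would prove by contrapositive. Suppose $\Phi$ fails the $\Delta_2$ condition; then I pick a sequence $(x_n)\subset(0,\infty)$ with $x_n\ge n$ and $\Phi(2x_n)\ge 2^n(1-\alpha)^{-1}\Phi(x_n)$, so in particular $x_n\uparrow\infty$ and $\Phi(x_n)\uparrow\infty$. Using nonatomicity, choose $A_n\in\cF$ with $\probp(A_n)=1/(n\Phi(x_n))$ for $n$ large enough and set $X_n:=x_n\one_{A_n}\in L^\infty\subseteq H^\Phi$. Then $\E[\Phi(|X_n|)]=1/n\to 0=\E[\Phi(0)]$, and $\probp(|X_n|>t)\le\probp(A_n)\to 0$ for every $t>0$, so $X_n\xrightarrow{dist.}0$; together these give $X_n\xrightarrow{\Phi\text{-}dist.}0$. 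Moreover, evaluating the Luxemburg functional at $\lambda=1/2$ yields $\E[\Phi_\alpha(2X_n)]=\Phi(2x_n)\probp(A_n)/(1-\alpha)\ge 2^n/(n(1-\alpha)^2)>1$ for large $n$, hence $N_\alpha(X_n)>1/2$.

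The main obstacle is the quantitative lower bound $\pi_\alpha(X_n)\ge(1-c_0)/2$ with $c_0:=\Phi^{-1}(1-\alpha)\in(0,1)$: naive estimates such as $\pi_\alpha(X_n)\ge\E[X_n]=x_n\probp(A_n)$ tend to zero, so one must carefully control the infimum over $m$. I would split into three regimes. For $m\in[0,x_n]$, $(X_n-m)^+=(x_n-m)\one_{A_n}$, so $\pi_\alpha(X_n,m)=m+(x_n-m)N_\alpha(\one_{A_n})$ is affine with positive slope (since $N_\alpha(\one_{A_n})\to 0$), hence $\pi_\alpha(X_n,m)\ge\pi_\alpha(X_n,0)=N_\alpha(X_n)\ge 1/2$. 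For $m\ge x_n$, $\pi_\alpha(X_n,m)=m\ge 1$ for large $n$. For $m=-t$ with $t>0$, the monotonicity of $N_\alpha$ yields $N_\alpha(X_n+t)\ge\max(N_\alpha(X_n),N_\alpha(t))\ge\max(1/2,t/c_0)$, hence $\pi_\alpha(X_n,-t)\ge-t+\max(1/2,t/c_0)\ge(1-c_0)/2$, the minimum being attained at $t=c_0/2$. Taking the infimum over $m$ gives $\pi_\alpha(X_n)\ge(1-c_0)/2>0=\pi_\alpha(0)$, contradicting (ii). Hence $\Phi$ must satisfy $\Delta_2$.
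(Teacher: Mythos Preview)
Your proof is correct. The implications (iii)$\Rightarrow$(i)$\Rightarrow$(ii) match the paper's argument. For (ii)$\Rightarrow$(iii), however, your route differs from the paper's in an interesting way. The paper works abstractly: it invokes Zaanen's result to obtain some sequence $X_n\downarrow 0$ in $L^\Phi$ with $N_\alpha(X_n)\geq\delta$, then truncates $Y_n=X_n\wedge a_n$ to land in $H^\Phi$ while retaining $N_\alpha(Y_n)\geq\delta/2$, shows $Y_n\xrightarrow{\Phi\text{-}dist.}0$ via dominated convergence, and finally reuses verbatim the contradiction machinery already developed in the proof of Theorem~\ref{prop: lebesgue continuity} (the convexity/affine-secant argument applied to the optimal $m_n$'s). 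You instead construct the witnesses explicitly as simple indicators $X_n=x_n\one_{A_n}$, which lets you compute $N_\alpha(X_n)$, $N_\alpha(\one_{A_n})$ and $N_\alpha(t)=t/c_0$ by hand, and then bound $\inf_m\pi_\alpha(X_n,m)$ directly by a clean three-regime case split, yielding the uniform lower bound $(1-c_0)/2$. Your approach is more self-contained and elementary (no appeal to Zaanen or to the earlier theorem's proof), at the cost of a bit more explicit computation; the paper's approach is shorter on the page precisely because it recycles prior work. One small cosmetic point: several of your estimates (positive slope on $[0,x_n]$, $N_\alpha(X_n)>1/2$, $\probp(A_n)\leq 1$) hold only for $n$ sufficiently large, which you acknowledge in places but could state uniformly once at the outset.
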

\begin{proof}
It follows from Kr\"{a}tschmer et al.~\cite[Theorem 2.8]{KratschmerSchiedZahle2014} that {\em (iii)} implies {\em (i)} because $H^\Phi=Y^\Phi=L^\Phi$ when $\Phi$ is $\Delta_2$. That {\em (i)} implies {\em (ii)} is obvious since $H^\Phi\subset Y^\Phi$. Hence, it remains to show that {\em (ii)} implies {\em (iii)}. To this end, suppose that {\em (ii)} holds but {\em (iii)} fails. As in the proof of Theorem~\ref{prop: lebesgue continuity}, we take a sequence $(X_n)\subset L^\Phi$ and a scalar $\delta\in(0,\infty)$ such that $X_n\downarrow0$ and $N_\alpha(X_n)\geq\delta$ for every $n\in\N$. By rescaling, we may assume without loss of generality that $\norm{X_1}_\Phi\leq 1$, so that $\E[\Phi(X_1)]\leq 1$; see, e.g., Edgar and Sucheston~\cite[Proposition 2.1.10]{EdgarSucheston1992}. For each $n\in\N$ we clearly have $X_n\wedge k\uparrow X_n$, so that $N_\alpha(X_n\wedge k)\uparrow N_\alpha(X_n)$; see, e.g., Edgar and Sucheston~\cite[Theorem 2.1.11]{EdgarSucheston1992}. As a result, for every $n\in\N$ we can take $a_n\in(0,\infty)$ such that $N_\alpha(X_n\wedge a_n)\geq\frac{\delta}{2}$. Now, set $Y_n=X_n\wedge a_n$ for $n\in\N$. Then, $(Y_n)\subset L^\infty\subset H^\Phi$. Clearly, $Y_n\xrightarrow{a.s.}0$, so that $\Phi(Y_n)\xrightarrow{a.s.}0$. Since $0\leq \Phi(Y_n)\leq \Phi(X_n)\leq \Phi(X_1)\in L^1$, the Dominated Convergence Theorem implies that $\E[\Phi(Y_n)]\to0$. This yields $Y_n\xrightarrow{\Phi\text{-}dist.}0$ and it therefore follows from our assumption {\em (ii)} that $\pi_\alpha(Y_n)\to \pi_\alpha(0)=0$. However, $N_\alpha(Y_n)\geq \frac{\delta}{2}$ and $0\leq Y_n\leq X_1$ for every $n\in\N$. The same argument used in the proof of {\em (i)}$\implies${\em (iii)} in Theorem~\ref{prop: lebesgue continuity} applies to $(Y_n)$ in place of $(X_n)$ there and yields a contradiction. This completes the proof.
\end{proof}

\smallskip

\begin{remark}
(i) One may wonder whether the preceding (semi)continuity results also hold with respect to the weaker convergence in distribution (which corresponds to the so-called weak convergence at the level of probability measures). It is easy to see that $\pi_\alpha$ is never continuous with respect to such convergence. Indeed, otherwise, take any $n\in\N$ and let $X_n=n1_{A_n}$ where $\mathbb{P}(A_n)=\frac{1}{n}$. Since we clearly have $X_n\xrightarrow{dist.}0$, it would follow that $0=\pi_\alpha(0)=\lim \pi_\alpha(X_n)\geq \lim\E[X_n]=1$ by \eqref{DMeq}, which is absurd. However, it remains open to us whether $\pi_\alpha$ is lower semicontinuous with respect to convergence in distribution. Incidentally, we note that lower semicontinuity with respect to convergence in distribution is related to a Fatou-type property that was first introduced in Gao and Munari~\cite{GaoMunari2017}. Indeed, one can readily show that the following statements are equivalent:
\begin{enumerate}
\item[(a)] $\pi_\alpha$ is lower semicontinuous with respect to convergence in distribution, i.e.\ for every sequence $(X_n)\subset L^\Phi$ and every $X\in L^\Phi$ we have
\[
X_n\xrightarrow{dist.}X \ \implies \ \pi_\alpha(X)\leq \liminf \pi_\alpha(X_n).
\]
\item[(b)] $\pi_\alpha$ satisfies the super Fatou property, i.e.\
 for every sequence $(X_n)\subset L^\Phi$ and every $X\in L^\Phi$ we have
\[
X_n\xrightarrow{a.s.}X \ \implies \ \pi_\alpha(X)\leq \liminf \pi_\alpha(X_n).
\]
\end{enumerate}
It is immediate that {\em (a)} implies {\em (b)} because almost-sure convergence implies convergence in distribution. Conversely, assume that {\em (b)} holds. Let $(X_n)\subset L^\Phi$ and $X\in L^\Phi$ be such that $X_n\xrightarrow{dist.}X$. The classical Skorohod representation yields $(Y_n)\subset L^\Phi$ and $Y\in L^\Phi$ such that $X$ and $Y$ have the same distribution, $X_n$ and $Y_n$ have the same distribution for every $n\in\N$, and $Y_n\xrightarrow{a.s.}Y$. Thus, by {\em (b)}, we obtain
$\pi_\alpha(X)=\pi_\alpha(Y)\leq \liminf\pi_\alpha(Y_n)=\liminf \pi_\alpha(X_n)$ as desired.

\medskip

(ii) Recall that for every $X\in L^1$ the Expected Shortfall of $X$ at level $\lambda\in(0,1)$ is defined by
\[
\ES_\lambda(X) := \frac{1}{1-\lambda}\int_\lambda^1\VaR_p(X)dp,
\]
where $\VaR_p(X)$ denotes the $p$-lower quantile of $X$ for every $p\in(0,1)$. It is worth mentioning that, when $\Phi(x)=x$ on $[0,\infty)$, we have $L^\Phi=L^1$ and $\pi_\alpha$ coincides with Expected Shortfall via
\[
\pi_\alpha(X) = \inf_{m\in\R}\left\{m+\frac{1}{1-\alpha}\E[(X-m)^+]\right\} = \ES_\alpha(X)
\]
for every $X\in L^1$. In this case, it was shown in Example 2.14 of Chen et al.\ \cite{Chen2018} that $\pi_\alpha$ has the super Fatou property.
\end{remark}


\section{Dual representations}

In the previous section we have established that Haezendonck-Goovaerts premium principles satisfy the Fatou property on {\em any} Orlicz space, regardless of whether $\Phi$ fulfills the $\Delta_2$ condition or not. This allows us to exploit some recent results on dual representations of law-invariant functionals on Orlicz spaces to derive a dual characterization of Haezendonck-Goovaerts premium principles as worst expectations on a suitable set of probability measures. This extends Bellini and Rosazza Gianin~\cite[Proposition 4]{BelliniRosazza2012} beyond the $\Delta_2$ case. Recall that the conjugate function of $\pi_{\alpha}$ with respect to the duality pair $(L^\Phi,L^\Psi)$ is the map $\pi_{\alpha}^\ast:L^\Psi\to(-\infty,\infty]$ defined by
\[
\pi_{\alpha}^\ast(Y) := \sup_{X\in L^\Phi}\{\E[XY]-\pi_{\alpha}(X)\}.
\]

\begin{lemma}
\label{prop_conj}
For every $Y \in L^\Psi$ we have that
$$
\pi_{\alpha}^\ast(Y)=
\begin{cases}
0 & Y\geq 0, \ ||Y||_{\Psi_{\alpha}} \leq 1, \ \E[Y]=1, \\
+\infty & \mbox{otherwise}.
\end{cases}
$$
Here $\Psi_\alpha$ is the conjugate of $\Phi_\alpha$ and $||\cdot||_{\Psi_{\alpha}}$ is the Orlicz norm of $L^{\Psi_{\alpha}}$.
\end{lemma}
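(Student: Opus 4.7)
The plan is to unfold $\pi_\alpha$ inside the conjugate and, via the substitution $Z=X-m$, decouple the scalar shift from the random part so that the remaining problem reduces to the conjugate of the Luxemburg norm $N_\alpha=\|\cdot\|_{\Phi_\alpha}$. Writing $X=Z+m$ in the definition of $\pi_\alpha^\ast(Y)$ yields
\[
\pi_\alpha^\ast(Y) = \sup_{m\in\R,\,Z\in L^\Phi}\bigl\{\E[ZY]+m(\E[Y]-1)-N_\alpha(Z^+)\bigr\}.
\]
Taking $Z=0$ and letting $m\to\pm\infty$ immediately forces $\pi_\alpha^\ast(Y)=+\infty$ unless $\E[Y]=1$, which takes care of the first dual constraint.

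Next I would dispose of the case where $Y$ takes negative values. If $\probp(Y<0)>0$, picking $A\subset\{Y<0\}$ of positive measure and setting $Z=-t\one_A$ for $t>0$ gives $Z^+=0$ and $\E[ZY]=-t\E[Y\one_A]\to\infty$, so $\pi_\alpha^\ast(Y)=+\infty$. Assuming now $Y\geq 0$ and $\E[Y]=1$, the splitting $Z=Z^+-Z^-$ combined with $\E[Z^-Y]\geq 0$ shows that the supremum is attained over $Z\geq 0$, leaving
\[
\pi_\alpha^\ast(Y) = \sup_{Z\geq 0}\bigl\{\E[ZY]-\|Z\|_{\Phi_\alpha}\bigr\}.
\]

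By positive homogeneity of the Luxemburg norm, this supremum is either $0$ (attained at $Z=0$) or $+\infty$, and it equals $0$ precisely when $\E[ZY]\leq\|Z\|_{\Phi_\alpha}$ for every $Z\geq 0$ in $L^\Phi$. To match this with $\|Y\|_{\Psi_\alpha}\leq 1$, I would observe that, since $Y\geq 0$, one has $|\E[ZY]|\leq\E[|Z|Y]$ while $\|\,|Z|\,\|_{\Phi_\alpha}=\|Z\|_{\Phi_\alpha}$, so in the definition $\|Y\|_{\Psi_\alpha}=\sup\{|\E[ZY]| : \|Z\|_{\Phi_\alpha}\leq 1\}$ one may equivalently restrict to $Z\geq 0$. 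The condition above therefore matches $\|Y\|_{\Psi_\alpha}\leq 1$ exactly, yielding the stated dichotomy. The only step requiring some care is this last identification between the restricted and unrestricted dual norms; the rest is routine sign bookkeeping.
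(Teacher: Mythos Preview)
Your proof is correct and takes a genuinely different route from the paper's. The paper does not compute the conjugate directly: instead it shows that the supremum defining $\pi_\alpha^\ast(Y)$ over $L^\Phi$ coincides with the supremum over the heart $H^\Phi$, by approximating an arbitrary $X\in L^\Phi$ with conditional expectations $\E[X\mid\cF_n]\in L^\infty$ along finitely-generated $\sigma$-algebras, and then invoking the Fatou property and a law-invariance result from Gao et al.\ to get $\pi_\alpha(\E[X\mid\cF_n])\to\pi_\alpha(X)$; once the conjugate is reduced to $H^\Phi$, the paper simply cites Bellini and Rosazza Gianin~\cite{BelliniRosazza2012}, where the formula is already known on the heart. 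Your argument bypasses all of this machinery: the substitution $X=Z+m$ decouples the translation-invariance constraint $\E[Y]=1$, the sign test isolates $Y\geq 0$, and the remaining supremum is recognised as the positive-cone half of the Orlicz dual norm $\|Y\|_{\Psi_\alpha}$, which you correctly argue equals the full dual norm when $Y\geq 0$. Your approach is more elementary and entirely self-contained (no Fatou property, no approximation lemma, no external citation for the heart case); the paper's approach, by contrast, illustrates how the Fatou property established earlier in the note can be leveraged, and it highlights that the law-invariance machinery suffices to transport heart results to the full Orlicz space.
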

\begin{proof}
Take any $Y\in L^\Psi$ and define a map $\pi_{\alpha,H^\Phi}^\ast:L^\Psi\to(-\infty,\infty]$ by setting
\[
\pi_{\alpha,H^\Phi}^\ast(Y) = \sup_{X\in H^\Phi}\{\E[XY]-\pi_{\alpha}(X)\}.
\]
We clearly have that $\pi_{\alpha}^\ast(Y)\geq\pi_{\alpha,H^\Phi}^\ast(Y)$. We claim that the converse inequality also holds. To see this, let $X \in L^\Phi$. By Gao et al.~\cite[Proposition 3.4]{GaoLeungMunariXanthos2017} there exists a sequence $(\cF_n)$ of finitely-generated $\sigma$-subalgebras of $\cF$ such that $\E[X|\cF_n]\to X$ almost surely and $\sup_{n\in\N}|\E[X|\cF_n]|\in L^\Phi$. Since $\pi_{\alpha}$ has the Fatou property, we have
\[
\pi_{\alpha}(X) \leq \liminf\pi_{\alpha}(\mathbb{E}[X|\cF_n]).
\]
Now, note that $\cC=\{Y\in L^\Phi \,; \ \pi_{\alpha}(Y)\leq\pi_{\alpha}(X)\}$ is convex,
law-invariant, closed with respect to dominated almost-sure convergence by the Fatou property, and clearly contains $X$. Hence, by Gao et al.~\cite[Corollary 4.5]{GaoLeungMunariXanthos2017} we have $\mathbb{E}[X|\cF_n]\in\cC$ for every $n\in\N$, so that $\limsup\pi_{\alpha}(\mathbb{E}[X|\cF_n]) \leq \pi_{\alpha}(X)$. As a consequence, we infer that
\begin{equation}
\label{eq000}
\pi_{\alpha}(\mathbb{E}[X|\cF_n]) \to \pi_{\alpha}(X).
\end{equation}
Note that $\mathbb{E}[X|\cF_n]\in L^\infty\subset H^\Phi$ for each $n\in\N$. Thus, by applying~\eqref{eq000} and the Dominated Convergence Theorem, we get
\[
\pi_{\alpha,H^\Phi}^\ast(Y) \geq \lim\Big(\E[\E[X|\cF_n]Y]-\pi_{\alpha}(\E[X|\cF_n])\Big) = \E[XY]-\pi_{\alpha}(X),
\]
which in turn implies $\pi^\ast_{\alpha}(Y)\leq\pi_{\alpha,H^\Phi}^\ast(Y)$. Therefore, we have $\pi^\ast_{\alpha}(Y)=\pi_{\alpha,H^\Phi}^\ast(Y)$ as claimed. We can now rely on Bellini and Rosazza Gianin~\cite[Proposition 4]{BelliniRosazza2012} to infer that
\[
\pi_{\alpha}^\ast(Y)=
\begin{cases}
0 & Y\geq 0, \ ||Y||_{\Psi_{\alpha}} \leq 1, \ \E[Y]=1, \\
+\infty & \mbox{otherwise}.
\end{cases}
\]
This concludes the proof of the lemma.
\end{proof}

\smallskip

We can now state the announced dual representation of Haezendonck-Goovaerts premium principles. In what follows we denote by $\mathcal{P}(\probp)$ the set of all probability measures over $(\Omega,\mathcal{F})$ that are absolutely continuous with respect to $\probp$.

\begin{theorem}
The Haezendonck-Goovaerts premium principle $\pi_\alpha$ satisfies
\[
\pi_{\alpha}(X) = \sup_{\mathbb{Q}\in\mathcal{Q}^\infty_{\alpha}}\E_\mathbb{Q}[X]
\]
for every $X\in L^\Phi$, where
\[
\mathcal{Q}^\infty_{\alpha} = \bigg\{\mathbb{Q}\in\mathcal{P}(\probp) \,; \ \frac{d\mathbb{Q}}{d\mathbb{P}}\in L^\infty, \   \bigg\|\frac{d\mathbb{Q}}{d\mathbb{P}}\bigg\|_{L^{\Psi_{\alpha}}}\leq 1\bigg\}.
\]
\end{theorem}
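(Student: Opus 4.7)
The proof splits into the two natural inequalities. The easy direction $\sup_{\mathbb{Q}\in\mathcal{Q}^\infty_{\alpha}}\E_\mathbb{Q}[X]\leq\pi_\alpha(X)$ will follow directly from Lemma~\ref{prop_conj}: for any $\mathbb{Q}\in\mathcal{Q}^\infty_\alpha$, the density $Y=d\mathbb{Q}/d\probp\in L^\infty\subset L^\Psi$ satisfies $Y\geq 0$, $\E[Y]=1$ and $\|Y\|_{\Psi_\alpha}\leq 1$, so $\pi_\alpha^*(Y)=0$ and hence $\E_\mathbb{Q}[X]=\E[XY]\leq \pi_\alpha(X)+\pi_\alpha^*(Y)=\pi_\alpha(X)$.

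For the reverse inequality, I would fix $X\in L^\Phi$ and reduce to the Orlicz-heart case via conditional expectations. Gao et al.~\cite[Proposition 3.4]{GaoLeungMunariXanthos2017} supplies a sequence $(\cF_n)$ of finitely generated $\sigma$-subalgebras of $\cF$ with $\E[X|\cF_n]\to X$ almost surely and $\sup_n|\E[X|\cF_n]|\in L^\Phi$. The argument used in the proof of Lemma~\ref{prop_conj}, which combines the Fatou property from Theorem~\ref{prop: fatou} with Gao et al.~\cite[Corollary 4.5]{GaoLeungMunariXanthos2017}, then yields $\pi_\alpha(\E[X|\cF_n])\to\pi_\alpha(X)$. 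Since each $\E[X|\cF_n]$ is bounded and thus lies in $H^\Phi$, the dual representation of Bellini and Rosazza Gianin~\cite[Proposition 4]{BelliniRosazza2012} on the Orlicz heart applies and gives
\[
\pi_\alpha\bigl(\E[X|\cF_n]\bigr)=\sup\bigl\{\E[\E[X|\cF_n]\,Y]\,;\,Y\in L^\Psi,\,Y\geq 0,\,\E[Y]=1,\,\|Y\|_{\Psi_\alpha}\leq 1\bigr\}.
\]

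The crucial step is a conditioning trick that replaces an arbitrary $L^\Psi$-density by a bounded one while preserving the paired expectation. For any admissible $Y$ as above, the $\cF_n$-measurability of $\E[X|\cF_n]$ together with two applications of the tower property yields $\E[\E[X|\cF_n]\,Y]=\E[X\,Z_n]$ with $Z_n:=\E[Y|\cF_n]$. Because $\cF_n$ is finitely generated, $Z_n$ is a simple function and therefore lies in $L^\infty$; moreover $Z_n\geq 0$, $\E[Z_n]=1$, and the contraction property of the conditional expectation on Orlicz norms (a consequence of Jensen's inequality applied to $\Psi_\alpha$) gives $\|Z_n\|_{\Psi_\alpha}\leq\|Y\|_{\Psi_\alpha}\leq 1$. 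Hence $Z_n=d\mathbb{Q}_n/d\probp$ for some $\mathbb{Q}_n\in\mathcal{Q}^\infty_\alpha$, and $\E[\E[X|\cF_n]\,Y]=\E_{\mathbb{Q}_n}[X]\leq \sup_{\mathbb{Q}\in\mathcal{Q}^\infty_\alpha}\E_\mathbb{Q}[X]$. Taking the supremum over $Y$ in the Bellini--Rosazza representation and then letting $n\to\infty$ produces $\pi_\alpha(X)\leq \sup_{\mathbb{Q}\in\mathcal{Q}^\infty_\alpha}\E_\mathbb{Q}[X]$, which combined with the first inequality proves the theorem. The main obstacle I expect is the conditioning trick itself: one has to check simultaneously that it preserves the value $\E[\E[X|\cF_n]\,Y]$ and that $Z_n$ satisfies the full set of constraints cut out by $\mathcal{Q}^\infty_\alpha$, most delicately the Orlicz-norm bound; once this is in hand, everything else reduces to the previously established ingredients.
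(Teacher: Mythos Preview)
Your argument is correct and takes a genuinely different route from the paper's. The paper proceeds abstractly: it invokes Gao et al.~\cite[Theorem 1.1]{GaoLeungMunariXanthos2017} to deduce that the Fatou property plus law invariance forces $\pi_\alpha$ to be $\sigma(L^\Phi,L^\infty)$ lower semicontinuous, then applies Fenchel--Moreau with respect to the pair $(L^\Phi,L^\infty)$ and reads off the representation from the conjugate computed in Lemma~\ref{prop_conj}. You instead bypass the abstract lower-semicontinuity theorem entirely: after the easy direction via Lemma~\ref{prop_conj}, you approximate $X$ by $\E[X\mid\cF_n]$, use the Orlicz-heart representation of Bellini and Rosazza Gianin, and then apply a conditioning trick---replacing an admissible $Y\in L^\Psi$ by $Z_n=\E[Y\mid\cF_n]\in L^\infty$---to land inside $\mathcal{Q}^\infty_\alpha$. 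The paper's route is shorter once one is willing to quote the $\sigma(L^\Phi,L^\infty)$-lsc result as a black box; yours is more self-contained and makes the passage from $L^\Psi$-densities to bounded densities completely explicit. One small remark: the contraction $\|Z_n\|_{\Psi_\alpha}\leq\|Y\|_{\Psi_\alpha}$ you need is for the \emph{Orlicz} norm (the paper's convention on $L^\Psi$), so the cleanest justification is via duality, $\bigl|\E[W\,\E[Y\mid\cF_n]]\bigr|=\bigl|\E[\E[W\mid\cF_n]\,Y]\bigr|\leq N_\alpha(\E[W\mid\cF_n])\,\|Y\|_{\Psi_\alpha}\leq N_\alpha(W)\,\|Y\|_{\Psi_\alpha}$, rather than Jensen applied directly to $\Psi_\alpha$; the conclusion is of course the same.
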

\begin{proof}
Recall that $\pi_{\alpha}$ is convex, law-invariant, and satisfies the Fatou property on $L^\Phi$. Then, it follows from Gao at el.~\cite[Theorem 1.1]{GaoLeungMunariXanthos2017} that $\pi_{\alpha}$ is $\sigma(L^\Phi,L^\infty)$ lower semicontinuous. Thus, a direct application of the classical Fenchel-Moreau dual representation gives
\[
\pi_\alpha(X) = \sup_{Y\in L^\infty}\{\E[XY]-\pi_\alpha^\ast(Y)\}
\]
for every $X\in L^\Phi$. In view of Lemma~\ref{prop_conj}, we can write
\[
\pi_\alpha(X) = \sup\{\E[XY] \,; \ Y\in L^\infty, \ Y\geq 0, \ ||Y||_{\Psi_{\alpha}}\leq 1, \ \E[Y]=1\}
\]
for every $X\in L^\Phi$. It remains to observe that every element $Y$ in the above supremum can be expressed as a Radon-Nikodym derivative with respect to $\probp$.
\end{proof}

\smallskip

\begin{remark}
In the theorem, if we use $
\mathcal{Q}^H_{\alpha} = \big\{\mathbb{Q}\in\mathcal{P}(\probp) \,; \ \frac{d\mathbb{Q}}{d\mathbb{P}}\in H^\Psi, \   \big\|\frac{d\mathbb{Q}}{d\mathbb{P}}\big\|_{L^{\Psi_{\alpha}}}\leq 1\big\}$ in the supremum, then it follows from Theorem~\ref{prop: lebesgue continuity} and Orihuela and Ruiz Gal\`{a}n~\cite[Theorem 1]{OrihuelaRuiz2012} that the supremum is attained at every $X\in L^\Phi$ if and only if $\Phi$ satisfies the $\Delta_2$ condition.
\end{remark}


\section*{Acknowledgements}

We thank an anonymous referee for the insightful observations that led to include the discussion on $\Phi$-weak convergence. Niushan Gao and Foivos Xanthos acknowledge financial support of NSERC Discovery Grants.


{\footnotesize

}


\begin{thebibliography}{99}

\bibitem{AhnShyamalkumar2014} Ahn, J.Y., Shyamalkumar, N.D.: Asymptotic theory for the empirical Haezendonck–Goovaerts risk measure, {\em Insurance: Mathematics and Economics}, 55, 78-90 (2014)

\bibitem{AliprantisBurkinshaw2003} Aliprantis, Ch.D., Burkinshaw, O.: {\em Locally Solid Riesz Spaces with Applications to Economics}, American Mathematical Society (2003)

\bibitem{ArtznerDelbaenEberHeath1999} Artzner, Ph., Delbaen, F., Eber, J.-M., Heath, D.:
    Coherent measures of risk. {\em Mathematical Finance} 9, 203-228 (1999)

\bibitem{BelliniLaevenRosazza2018} Bellini, F., Laeven, R.J., Rosazza Gianin, E.: Robust return risk measures, {\em Mathematics and Financial Economics}, 12, 5-32 (2018)

\bibitem{BelliniRosazza2008} Bellini, F., Rosazza Gianin, E.: On Haezendonck risk measures,
    {\em Journal of Banking and Finance}, 32, 986-994 (2008)

\bibitem{BelliniRosazza2009} Bellini, F., Rosazza Gianin, E.: Optimal portfolios with Haezendonck risk measures, {\em Statistics and Decisions}, 26, 89-108 (2009)

\bibitem{BelliniRosazza2012} Bellini, F., Rosazza Gianin, E.: Haezendonck-Goovaerts risk
    measures and Orlicz quantiles, {\em Insurance: Mathematics and Economics}, 51, 107-114
    (2012)

\bibitem{BiaginiFrittelli2008} Biagini, S., Frittelli, M.: A unified framework for utility maximization problems: An Orlicz space approach, {\em The Annals of Applied Probability}, 18, 929-966 (2008)

\bibitem{Chen2018} Chen, S., Gao, N. and Xanthos, F.: The strong Fatou property of risk measures, {\em Dependence Modeling}, 6(1), 183-196 (2018)

\bibitem{CheriditoLi2008} Cheridito, P., Li, T.: Dual characterization of properties of risk measures on Orlicz hearts, {\em Mathematics and Financial Economics}, 2, 1-29 (2008)

\bibitem{CheriditoLi2009} Cheridito, P., Li, T.: Risk measures on Orlicz hearts, {\em Mathematical Finance}, 19, 189-214 (2009)


\bibitem{EdgarSucheston1992} Edgar, G.A., Sucheston, L.: {\em Stopping Times and Directed
    Processes}, Cambridge University Press, Cambridge (1992)

\bibitem{FarkasKochMunari2014} Farkas, W., Koch-Medina, P., Munari, C.: Beyond cash-additive risk measures: when changing the num\'{e}raire fails, {\em Finance and Stochastics}, 18, 145-173 (2014)

\bibitem{GaoMunari2017} Gao, N., Munari, C.: Surplus-invariant risk measures, {\em Mathematics of Operations Research}, to appear.

\bibitem{GaoLeungMunariXanthos2017} Gao, N., Leung, D., Munari, C., Xanthos, F.: Fatou
    property, representations, and extensions of law-invariant risk measures on general
    Orlicz spaces, {\em Finance and Stochastics}, 22, 395-415 (2018)

\bibitem{GoovaertsKaasDhaeneTang2004} Goovaerts, M.J., Kaas, R., Dhaene, J., Tang, Q.: Some new classes of consistent risk measures, {\em Insurance: Mathematics and Economics}, 34, 505-516 (2004)

\bibitem{GoovaertsLaeven2008} Goovaerts, M.J., Laeven, R.J.A.: Premium calculation and insurance pricing, {\em Encyclopedia of Quantitative Risk Analysis and Assessment}, 3, 1302-1314 (2008)

\bibitem{GoovaertsKaasDhaeneTang2012} Goovaerts, M.J., Linders, L., Van Weert, K., Tank, F.: On the interplay between distortion, mean value and Haezendonck-Goovaerts risk measures, {\em Insurance: Mathematics and Economics}, 51, 10-18 (2012)

\bibitem{HaezendonckGoovaerts1982} Haezendonck, J., Goovaerts, M.: A new premium
    calculation principle based on Orlicz norms, {\em Insurance: Mathematics and
    Economics}, 1, 41-53 (1982)

\bibitem{KratschmerSchiedZahle2014}Kr\"{a}tschmer, V., Schied, A., Z\"{a}hle, H. Comparative and qualitative robustness for law-invariant risk measures. {\em Finance and Stochastics}, 18(2), 271-295 (2014).

\bibitem{LiuPengWang2017} Liu, Q., Peng, L., Wang, X.: Haezendonck–Goovaerts risk measure with a heavy tailed loss, {\em Insurance: Mathematics and Economics}, 76, 28-47 (2017)

\bibitem{MaoHu2012} Mao, T., Hu, T.: Second-order properties of the Haezendonck–Goovaerts risk measure for extreme risks, {\em Insurance: Mathematics and
    Economics}, 51, 333-343 (2012)

\bibitem{OrihuelaRuiz2012} Orihuela, J., Ruiz Gal\`{a}n, M.: Lebesgue property for convex risk measures on Orlicz spaces, {\em Mathematics and Financial Economics}, 6, 15-35 (2012)

\bibitem{PengWangZheng2015} Peng, L., Wang, X., Zheng, Y.: Empirical likelihood inference for Haezendonck-Goovaerts risk measure, {\em European Actuarial Journal}, 5, 427–445 (2015)

\bibitem{Pichler2017} Pichler, A.: A quantitative comparison of risk measures, {\em Annals of Operations Research}, 254, 251-275 (2017)


\bibitem{TangYang2012} Tang, Q., Yang, F.: On the Haezendonck–Goovaerts risk measure for extreme risks, {\em Insurance: Mathematics and Economics}, 50, 217-227 (2012)

\bibitem{TangYang2014} Tang, Q., Yang, F.: Extreme value analysis of the Haezendonck–Goovaerts risk measure with a general Young function, {\em Insurance: Mathematics and Economics}, 59, 311-320 (2014)

\bibitem{WangLiuHouPeng2018} Wang, X., Liu, Q., Hou, J., Peng, L.: Nonparametric inference for sensitivity of Haezendonck–Goovaerts risk measure, {\em Scandinavian Actuarial Journal}, 8, 661-680 (2018)

\bibitem{WangPeng2016} Wang, X., Peng, L.: Inference for intermediate Haezendonck–Goovaerts risk measure, {\em Insurance: Mathematics and Economics}, 68, 231-240 (2016)

\bibitem{Zaanen1983} Zaanen, A.C.: {\em Riesz Spaces II}, North-Holland Publishing Company (1983)

\bibitem{ZhuZhangZhang2013} Zhu, Y., Zhang, L., Zhang, Y.: Optimal reinsurance under the Haezendonck risk measure, {\em Statistics \& Probability Letters}, 83, 1111-1116 (2013)

\end{thebibliography}
\end{document}